\documentclass[11pt]{article}
\usepackage[a4paper,margin=2.5cm]{geometry}
\usepackage{amsmath,amssymb,amsthm,mathtools,bm}
\usepackage{booktabs,array}
\usepackage{graphicx}
\usepackage{tikz}
\usetikzlibrary{arrows.meta,positioning,shapes.geometric,fit}
\usepackage{hyperref}
\usepackage{microtype}
\usepackage{enumitem}
\usepackage{authblk}
\usepackage{siunitx}
\usepackage{xcolor}

\hypersetup{colorlinks=true,linkcolor=blue,citecolor=blue,urlcolor=blue}
\newtheorem{proposition}{Proposition}
\newtheorem{principle}{Principle}
\newtheorem{definition}{Definition}
\theoremstyle{remark}
\newtheorem{remark}{Remark}

\title{Earth as a Closed Life-Support System\\
\large Finite Buffers, Dynamical Thresholds, and Planetary Sustainability}
\author[1,2]{Jean-Pierre Gazeau}
\affil[1]{Universit\'e Paris Cit\'e, CNRS, Astroparticule et Cosmologie, F-75013 Paris, France}
\affil[2]{Faculty of Mathematics, University of Bia\l ystok, Bia\l ystok, Poland}
\date{25 August 2026}

\begin{document}
\maketitle

\begin{abstract}
Earth is nearly closed with respect to matter but open with respect to energy. We formulate planetary sustainability as a minimal nonequilibrium dynamical problem for coupled finite stocks and fluxes, using the operational language of closed ecological and engineered life-support systems: inventories, flux balances, recycling efficiencies, processing capacities, reserve times, and failure thresholds. The framework is organized around a general competition between two kinds of finite buffer: resource stocks that can be drawn down and waste-processing capacities that can be saturated. The physiological contrast between fasting and renal failure is therefore not taken as a universal claim that waste is always more limiting than resources; rather, it motivates a broader reserve-time principle in which the dominant constraint is the one whose buffer becomes critical first. The resulting nonlinear dynamical model couples a regenerative resource stock, an accumulated waste stock, and an aggregate activity variable, and exhibits a soft waste threshold, at which accumulated waste suppresses growth while processing remains formally capable, and a hard threshold, at which waste production exceeds maximal processing capacity. We extend the model to recycling and multiple resource and waste classes, work through an explicit toy numerical example, and relate the approach to the Spaceship Earth metaphor, planetary boundaries, Limits to Growth, and stock-pollution growth models. A short multiscale analogy---cell, organism, planet---shows that the relevant life-support function is distributed across degradation, recycling, transport, regulation, and removal. The central claim is consequently conditional rather than absolute: carrying capacity is an emergent, time-dependent property of coupled resource, waste, recycling, technological, and institutional dynamics, and either resource depletion or elimination failure may become the binding constraint depending on their characteristic reserve times.
\end{abstract}

\noindent\textbf{Keywords:} nonequilibrium dynamics; finite-capacity systems; resource--waste dynamics; dynamical thresholds; recycling; carrying capacity; planetary sustainability; reserve times.

\section{Introduction: the Spaceship Earth analogy, taken seriously}
\begin{quote}
\small\itshape
When the last tree is cut, the last fish is caught, and the last river is polluted … you will realize that you can't eat money.
Alanis Obomsawin (1972)\cite{Obomsawin1972}
\end{quote}

Earth's long-term habitability depends on coupled dynamics of renewable resources, waste production, recycling, and energy throughput. Although these processes are often studied separately, they can be viewed as interacting components of a materially bounded life-support system, an image familiar from space engineering and regenerative ecology. Apart from small meteoritic inputs and atmospheric escape, Earth is nearly closed to matter, while it receives a large flux of low-entropy solar energy and radiates degraded energy back to space. The metaphor was formulated economically by Boulding~\cite{Boulding1966} and popularized by Fuller~\cite{Fuller1969}. Human societies and the biosphere operate inside a finite material domain sustained by continuous energy throughput.

The analogy is usually invoked rhetorically. Our purpose is more operational: to ask whether Earth can be studied, at least in principle, with concepts already used for spacecraft, orbital stations, and regenerative life-support projects. Such systems are designed and audited through explicit accounts of stocks and fluxes, and their survival depends on the simultaneous closure and stability of several coupled loops rather than on the abundance of any single input.

This viewpoint has a long economic pedigree. Mill's stationary state~\cite{Mill2011} and Daly's steady-state economy~\cite{Daly1977} both question indefinite expansion of material throughput. Georgescu-Roegen~\cite{GeorgescuRoegen1971} emphasized the entropic asymmetry between material reuse and energy degradation. The present note translates part of that intuition into a deliberately minimal dynamical framework.

The central question is not whether ``resources'' or ``waste'' are universally more important. The sharper question is:
\begin{quote}
\emph{Which finite buffer becomes critical first: the stock from which activity draws, or the capacity by which its residuals are transformed, exported, diluted, stored, or isolated?}
\end{quote}
This reformulation is essential. Food deprivation can be tolerated for weeks, while complete renal failure without replacement therapy can become fatal on the scale of days to weeks; yet oxygen and water are resources whose sudden loss is much faster still. The distinction therefore lies not in the labels ``resource'' and ``waste'' but in the ratio between available buffer and throughput.

The model below isolates the waste-elimination mechanism because it yields two transparent thresholds, while retaining a symmetric resource-side reserve time. The resulting framework is not a calibrated Earth-system model. Its purpose is conceptual: to identify measurable quantities and failure modes that a more detailed, data-assimilated model should retain. From a statistical-physics viewpoint, the model belongs to the broad class of nonequilibrium finite-capacity systems in which driven throughput competes with regenerative and dissipative processes, producing stationary and non-stationary regimes separated by dynamical thresholds.

\section{Earth as a closed life-support system}

Let $X(t)=(X_1,\ldots,X_m)$ denote material inventories such as freshwater, nutrients, soil, biomass, minerals, and atmospheric constituents. Their generic balance takes the form
\begin{equation}
\dot X_i = J_i^{\rm in}-J_i^{\rm out}+J_i^{\rm regen}-J_i^{\rm use}+J_i^{\rm rec}.
\label{eq:balance}
\end{equation}
Natural regeneration and organized recycling are distinct, finite processes. Energy is different: Earth receives radiation, converts part of it into chemical, kinetic, and thermal forms, and exports entropy through infrared radiation. Matter can circulate; usable energy cannot be recycled indefinitely.

A life-support audit therefore requires quantities of the kind listed in Table~\ref{tab:audit}.

\begin{table}[ht]
\centering
\caption{Minimal audit variables for a closed life-support system.}
\label{tab:audit}
\begin{tabular}{>{\raggedright\arraybackslash}p{0.27\linewidth}p{0.66\linewidth}}
\toprule
Quantity & Operational meaning \\
\midrule
Stocks & Available resources and accumulated wastes at a given time. \\
Consumption fluxes & Resource use per unit time, by sector or activity class. \\
Regeneration fluxes & Natural replenishment rates and their environmental dependence. \\
Waste-production fluxes & Residual material generated per unit activity and time. \\
Processing capacities & Maximum rates at which waste can be transformed, exported, diluted, stored, or rendered harmless. \\
Recycling efficiencies & Fractions of material streams returned to useful stocks. \\
Reserve times & Time to reach a critical stock, concentration, or capacity margin under a stated scenario. \\
Margins and thresholds & Distances to boundaries beyond which recovery, regulation, or survival fails. \\
\bottomrule
\end{tabular}
\end{table}

The International Space Station operates with explicit budgets for oxygen, water, food, and recycling. Biosphere~2 demonstrated that a sealed ecological system can fail through internal feedbacks not anticipated from simple resource inventories alone~\cite{Nelson1996}. ESA's MELiSSA programme develops biologically regenerative life-support loops around coupled resource-and-waste budgets. These systems are imperfect analogues of Earth, but they establish an important methodological point: viability is a network property.

\begin{figure}[ht]
\centering
\begin{tikzpicture}[
  node distance=12mm and 13mm,
  >=Latex,
  box/.style={draw, rounded corners=2mm, align=center, text width=3.0cm, minimum height=1.35cm, inner sep=4pt, font=\small},
  small/.style={draw, rounded corners=1.5mm, align=center, text width=3.7cm, minimum height=1.05cm, inner sep=4pt, font=\small},
  arrow/.style={->, thick}
]
\node[box] (cell) {\textbf{Cell}\\metabolic inputs\\degradation, recycling, export};
\node[box, right=of cell] (org) {\textbf{Organism}\\food, water, oxygen\\liver, kidneys, lungs, excretion};
\node[box, right=of org] (planet) {\textbf{Planet}\\resource stocks and regeneration\\sinks, treatment, recycling, isolation};

\draw[arrow] (cell) -- (org);
\draw[arrow] (org) -- (planet);
\node[font=\scriptsize, above=4mm of org] {increasing organizational scale $\longrightarrow$};

\node[small, below=13mm of org] (principle) {Common requirement:\\\textbf{finite buffers + finite processing}\\must match throughput};
\draw[arrow] (cell.south) |- (principle.west);
\draw[arrow] (org.south) -- (principle.north);
\draw[arrow] (planet.south) |- (principle.east);
\end{tikzpicture}
\caption{A multiscale life-support analogy. No single cellular organelle is a ``kidney'': degradation, recycling, transport, detoxification, and export are distributed functions. At organismal and planetary scales the implementation changes, but viability still depends on matching throughput to finite stocks and finite processing capacities.}
\label{fig:scales}
\end{figure}

\section{Life as organized throughput: from cell to organism to planet}

Following Schr\"odinger~\cite{Schrodinger1944}, a living system locally maintains organization by exporting entropy to its environment. Prigogine~\cite{Prigogine1977} generalized this view through dissipative structures: order maintained far from equilibrium by continuous throughput. Vernadsky~\cite{Vernadsky1998} framed the biosphere itself as a self-organizing thermodynamic envelope.

For the purposes of this note, life is described operationally through four jointly necessary functions: (i) controlled capture and conversion of energy and material from the inorganic environment; (ii) preservation and transmission of information; (iii) regulation and adaptation; and (iv) transformation, recycling, sequestration, or elimination of harmful residuals. None is sufficient alone.

The same logic appears across scales, as summarized in Fig.~\ref{fig:scales}. At the cellular level there is no unique ``renal'' organelle. Proteasomal and lysosomal degradation, autophagy, membrane transport, redox regulation, and metabolite export collectively prevent internal residuals from overwhelming homeostasis. At the organismal level, specialized organs perform analogous distributed functions. At the planetary level, the relevant processes are biogeochemical sinks, ecosystem transformation, technical treatment, recycling, transport, storage, and isolation. The analogy is functional rather than anatomical.

The physiological comparison must nevertheless be stated carefully. Prolonged fasting and acute renal failure illustrate different reserve times, but they do not establish a universal precedence of waste over resources. Oxygen and water are resources whose buffers are short. The robust systems principle is therefore the following.

\begin{principle}[Reserve-time limited viability]
In a materially bounded life-support system, long-term viability requires both (i) resource stocks and regeneration rates compatible with demand and (ii) residual-production rates compatible with finite processing, storage, dilution, export, or isolation capacities. The binding constraint is the one whose associated reserve time or capacity margin becomes critical first.
\end{principle}

This statement preserves the motivating asymmetry while removing an unnecessary universal claim. Persistent waste may additionally damage the very processes that generate resources or maintain regulation, so resource and waste constraints are often coupled rather than independent.

Let $\eta\in[0,1]$ be the recovery fraction of a material stream. Even when $\eta$ is close to one, repeated cycling leaves losses unless $\eta=1$; recovery also consumes energy and can create secondary residuals. Recycling therefore changes rates and delays thresholds but does not abolish thermodynamic and capacity constraints.

The reserve-time principle also offers a natural way to situate \emph{aging} within the same language, without extending the scope of the note into biology proper. Biological aging is increasingly described as the progressive accumulation of unrepaired molecular and cellular damage---misfolded proteins, damaged organelles, senescent cells---once maintenance, degradation, and clearance pathways fail to keep pace with damage production~\cite{LopezOtin2013}; in the notation of Section~4 this corresponds to a slow endogenous drift of $b$ upward or $\varepsilon_{\max}$ downward, pushing a nominally stationary system toward, and eventually past, its own soft and hard thresholds. Tellingly, the same accumulation-versus-processing competition is not restricted to living matter: amorphous inorganic materials such as glasses and polymers exhibit an analogous \emph{physical aging}, a slow structural relaxation toward denser, lower-energy configurations accompanied by a gradual drift in mechanical and functional properties, with no biological repair involved at all~\cite{Struik1978}. Aging in this broad sense can therefore be read as a special, slowly-driven regime of the reserve-time competition of Principle~1, in which it is the buffers or the processing capacities themselves---rather than external throughput---that erode with time.

\section{A minimal nonequilibrium dynamical model}

Let $R(t)\ge 0$ be an effective regenerative resource stock, $W(t)\ge 0$ an effective accumulated harmful-waste stock, and $N(t)\ge 0$ an aggregate measure of population, economic activity, or organized throughput. Consider
\begin{align}
\dot R &= \rho(R)-aN+\chi\,\varepsilon(W), \label{eq:R}\\
\dot W &= bN-\varepsilon(W), \label{eq:W}\\
\dot N &= N\,g(R,W). \label{eq:N}
\end{align}
Here $a>0$ is resource use per unit activity, $b>0$ gross waste production per unit activity, $\rho(R)$ natural regeneration, $\varepsilon(W)$ a saturating waste-processing flux, and $\chi\in[0,1]$ the fraction of processed waste returned to the resource pool.

For transparency, take
\begin{equation}
\rho(R)=sR\left(1-\frac{R}{R_{\max}}\right),
\qquad
\varepsilon(W)=\varepsilon_{\max}\frac{W}{K+W},
\label{eq:functions}
\end{equation}
and first isolate the waste-driven mechanism by setting $\chi=0$ and $g(R,W)\to g(W)$ with
\begin{equation}
g(W)=r\left(1-\frac{W}{W_c}\right).
\label{eq:g}
\end{equation}
The parameter $W_c$ is the waste level at which net activity growth changes sign.

At fixed $N$, Eq.~\eqref{eq:W} has a finite equilibrium whenever $bN<\varepsilon_{\max}$:
\begin{equation}
W^*(N)=\frac{bNK}{\varepsilon_{\max}-bN}.
\label{eq:Wstar}
\end{equation}
This yields two distinct thresholds.

\begin{proposition}[Soft and hard waste thresholds]
Define
\begin{equation}
N_c:=\frac{\varepsilon_{\max}}{b},
\qquad
N_*:=\frac{W_c\varepsilon_{\max}}{b(K+W_c)}.
\label{eq:thresholds}
\end{equation}
Then $0<N_*<N_c$, and three regimes occur:
\begin{enumerate}[label=(\roman*)]
\item $0<N<N_*$: $W^*(N)<W_c$ and $g(W^*)>0$;
\item $N_*<N<N_c$: $W^*(N)$ remains finite, but $W^*(N)>W_c$, so $g(W^*)<0$; activity becomes self-limiting although processing capacity has not been exceeded;
\item $N\ge N_c$: $bN\ge\varepsilon_{\max}$, hence $\dot W\ge 0$ for all sufficiently large $W$ and no finite waste equilibrium exists.
\end{enumerate}
\end{proposition}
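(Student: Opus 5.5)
The plan is to reduce everything to the monotonicity of the explicit equilibrium map \eqref{eq:Wstar} on the interval $0<N<N_c$. All parameters $b,K,\varepsilon_{\max},W_c,r$ are taken positive.

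\emph{Ordering of the thresholds.} Since $W_c/(K+W_c)\in(0,1)$, we have $N_*=N_c\,W_c/(K+W_c)$. This gives $0<N_*<N_c$ at once.

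\emph{Regimes (i) and (ii).} On $0<N<N_c$ the denominator $\varepsilon_{\max}-bN$ is positive, so $W^*(N)$ is finite and nonnegative. Differentiating,
\[
\frac{dW^*}{dN}=\frac{bK\varepsilon_{\max}}{(\varepsilon_{\max}-bN)^2}>0,
\]
so $W^*$ is strictly increasing there, starts at $W^*(0)=0$, and tends to $+\infty$ as $N\uparrow N_c$. The key step is to solve $W^*(N)=W_c$ exactly. Cross-multiplying gives $bNK=W_c(\varepsilon_{\max}-bN)$, so $N=W_c\varepsilon_{\max}/(b(K+W_c))=N_*$. Equivalently, one can invert \eqref{eq:Wstar} as $N=\varepsilon(W)/b$ and evaluate at $W=W_c$; this also explains the formula for $N_*$, namely that $bN_*=\varepsilon(W_c)$.

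Strict monotonicity then gives $W^*(N)<W_c$ for $N<N_*$ and $W^*(N)>W_c$ for $N_*<N<N_c$. Because $g(W)=r(1-W/W_c)$ is strictly decreasing with $g(W_c)=0$ (assuming $r>0$), this yields $g(W^*)>0$ in (i) and $g(W^*)<0$ in (ii). The statement in (ii) that processing capacity is not exceeded is simply $bN<\varepsilon_{\max}$, i.e. the finiteness of $W^*$ already noted.

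\emph{Regime (iii).} For $N\ge N_c$ we have $bN\ge\varepsilon_{\max}>\varepsilon(W)$ for every $W\ge0$, since $W/(K+W)<1$. Hence $\dot W=bN-\varepsilon(W)>0$ for all $W$, which is stronger than the claim. In particular, no finite $W$ satisfies $\dot W=0$, so there is no finite equilibrium. It is worth noting that $\dot W$ is positive everywhere, not just for large $W$; the statement as written is consistent with this but weaker.

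\emph{Main obstacle.} None of the steps is genuinely hard. The only point needing care is the identification $W^*(N_*)=W_c$ via the inversion $N=\varepsilon(W)/b$. Since the paper says nothing about the sign of $r$, I would state $r>0$ explicitly as an assumption, because the sign conclusions for $g$ in (i) and (ii) rely on it.
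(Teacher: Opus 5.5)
Your proof is correct and follows the paper's proof sketch essentially step for step: $0<N_*<N_c$ from $W_c/(K+W_c)<1$, solving $W^*(N)=W_c$ to get $N=N_*$, monotonicity of $W^*$ on $[0,N_c)$ for (i)--(ii), and the saturation bound on $\varepsilon$ for (iii). Your strict bound $\varepsilon(W)<\varepsilon_{\max}$ is slightly sharper than the paper's $\varepsilon(W)\le\varepsilon_{\max}$, since the strict version is what excludes an equilibrium in the boundary case $N=N_c$; stating $r>0$ explicitly is also a sensible addition.
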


\noindent\emph{Proof sketch.} Since $W_c/(K+W_c)<1$, one has $N_*<N_c$. Solving $W^*(N)=W_c$ gives $N=N_*$. Because $W^*(N)$ increases monotonically on $[0,N_c)$, cases (i) and (ii) follow. Case (iii) follows from the bound $\varepsilon(W)\le\varepsilon_{\max}$.

The three regimes identified in the proposition are summarized schematically in Fig.~\ref{fig:three-regimes}. The figure is intended as a qualitative dynamical map rather than as a numerical solution of Eqs.~\eqref{eq:R}--\eqref{eq:N}: below the soft threshold the coupled stocks and activity may approach a finite stationary state; beyond the soft threshold the accumulation of waste makes activity self-limiting while a finite waste equilibrium still exists; beyond the hard processing threshold no finite waste equilibrium is possible, and sustained overload drives the system away from a viable stationary regime.

\begin{figure}[htbp]
\centering
\includegraphics[width=\textwidth]{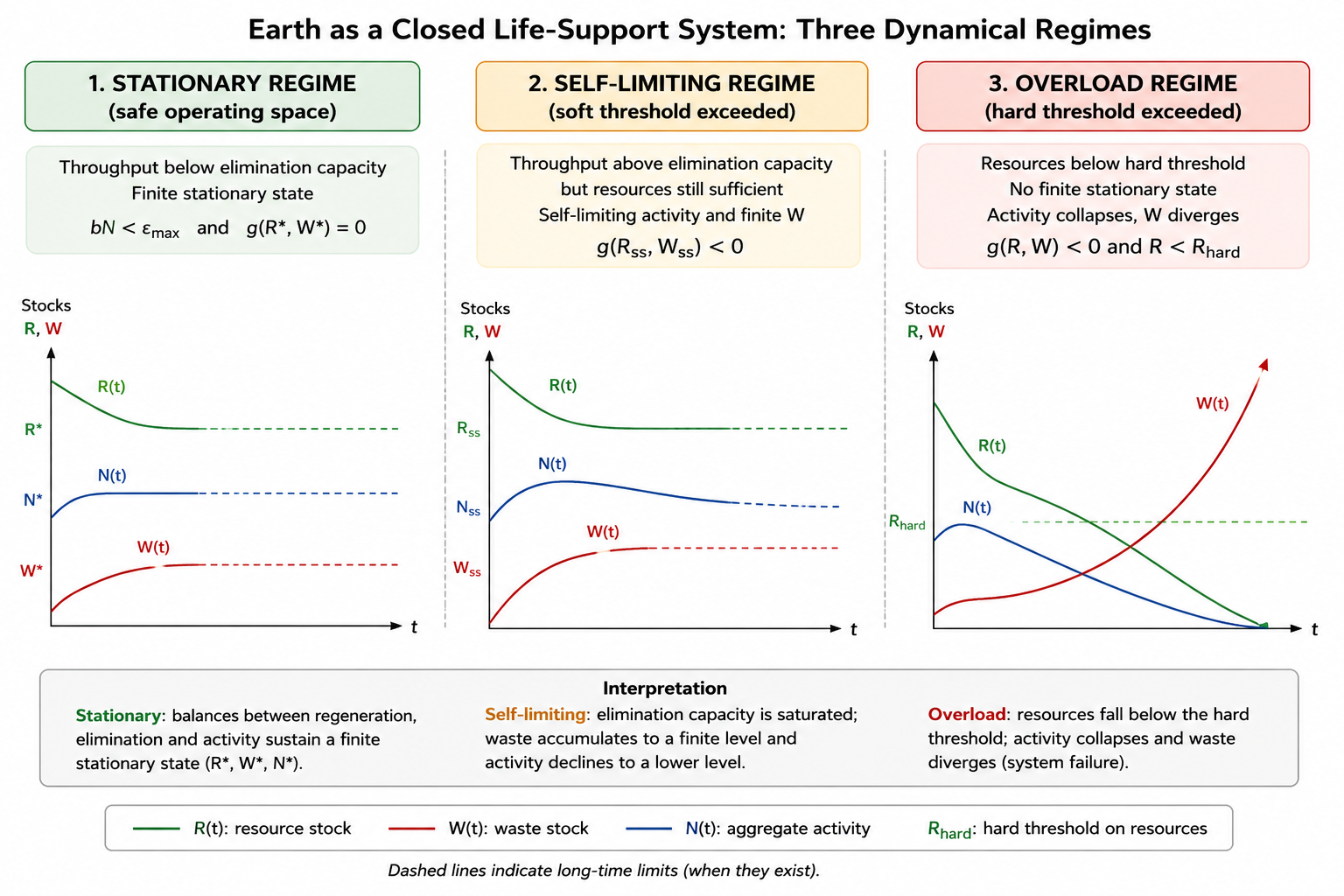}
\caption{Schematic representation of the three dynamical regimes associated with finite resource and waste-processing buffers. In the stationary regime, resource, waste, and activity can approach finite long-time values. Beyond the soft threshold, waste remains bounded but suppresses activity, producing a self-limiting regime. Beyond the hard processing threshold, waste production exceeds maximal elimination capacity, so no finite waste equilibrium exists. The curves are qualitative and are not numerical solutions of Eqs.~\eqref{eq:R}--\eqref{eq:N}.}
\label{fig:three-regimes}
\end{figure}

The proposition is a mathematical result about the waste subsystem. It should not be read as a proof that waste must dominate resource scarcity in every physical realization. To compare the two sides, introduce characteristic reserve times.

For $N>N_c$ and $W\gg K$, Eq.~\eqref{eq:W} gives the rough waste-side estimate
\begin{equation}
\tau_W\simeq \frac{W_c-W_0}{bN-\varepsilon_{\max}}.
\label{eq:tauW}
\end{equation}
Likewise, if demand exceeds regeneration, a resource stock with critical level $R_{\rm crit}$ has the approximate depletion time
\begin{equation}
\tau_R\simeq \frac{R_0-R_{\rm crit}}{aN-\rho(R)}.
\label{eq:tauR}
\end{equation}
The two expressions have the same balance structure. A short $\tau_R$ makes a resource behave dynamically like a hard constraint; a long $\tau_R$ produces a slow depletion regime. Similarly, a short $\tau_W$ signals rapid loss of viability from residual accumulation.

\begin{remark}[Competition of buffers]
The pair $(\tau_R,\tau_W)$ gives a more general interpretation of the physiological analogy. Food reserves may correspond to a long $\tau_R$, whereas oxygen has an extremely short one. Renal failure illustrates a small waste-processing margin rather than a universal theorem that waste is always more dangerous than resource loss. At any scale, the operative question is which buffer reaches its critical boundary first.
\end{remark}

Returning to the coupled system, a resource equilibrium requires
\begin{equation}
aN=\rho(R^*)+\chi\varepsilon(W^*).
\label{eq:Req}
\end{equation}
Even below $N_c$, such an equilibrium can fail to exist if demand exceeds the maximum of the right-hand side. Conversely, pollution may reduce regeneration itself, for example through
\begin{equation}
\rho(R,W)=sR\left(1-\frac{R}{R_{\max}}\right)q(W),\qquad q'(W)<0,
\end{equation}
creating a reinforcing loop in which waste reduces regeneration, declining resources impair treatment capacity, and residuals accumulate faster.

\subsection{Recycling and technological change}

Suppose a fraction $\eta\in[0,1]$ of the gross residual stream is recovered before entering the harmful stock:
\begin{equation}
\dot W=(1-\eta)bN-\varepsilon(W).
\end{equation}
Then
\begin{equation}
N_c(\eta)=\frac{\varepsilon_{\max}}{(1-\eta)b},
\qquad
N_*(\eta)=\frac{W_c\varepsilon_{\max}}{(1-\eta)b(K+W_c)}.
\label{eq:recycling}
\end{equation}
Recycling raises both thresholds but removes them only in the idealized limit $\eta=1$. Efficiency improvements in $a$ or $b$ can likewise be offset by changes in scale $N$; the system-level quantities are $aN$ and $(1-\eta)bN$. Processing capacity $\varepsilon_{\max}$ is itself dynamic: investment and innovation can raise it, while ageing infrastructure, conflict, energy scarcity, or damage from pollution can lower it.

\section{Multiple resources, multiple wastes, and planetary loops}

A single $R$ and $W$ is analytically useful but empirically insufficient. Writing $\bm R=(R_1,\ldots,R_m)$ and $\bm W=(W_1,\ldots,W_q)$, one may generalize to
\begin{align}
\dot{\bm R}&=\bm\rho(\bm R,\bm W)-A\bm n+C\bm\varepsilon(\bm W),\\
\dot{\bm W}&=B\bm n-\bm\varepsilon(\bm W),\\
\dot{\bm n}&=\operatorname{diag}(\bm n)\,\bm g(\bm R,\bm W).
\label{eq:vector}
\end{align}
The scalar model is a structural prototype. Real categories differ not only by parameter values but by the form and timescale of their regeneration and processing functions.

\paragraph{Carbon.}
Fossil carbon was formed over geological timescales and is extracted over industrial timescales. In the present language its regeneration rate is effectively zero over any observational window relevant to modern activity. Combustion returns carbon to the atmosphere faster than oceans, vegetation, and soils can reabsorb it~\cite{IPCC2021,Smil2017}; the relevant elimination function is therefore finite and saturable.

\paragraph{Water.}
Water is not destroyed by use, but its location and quality change. Potable water requires treatment, energy, and ecosystem services. At minimum, one must distinguish usable and contaminated stocks with a treatment loop between them. Water is also a clear example of why reserve time matters: some regional systems possess only short buffers even though the global hydrological cycle is large.

\paragraph{Nuclear waste.}
High-level radioactive waste differs qualitatively from carbon because its reduction is governed mainly by radioactive decay. Deep geological disposal chiefly isolates rather than rapidly processes the stock~\cite{IAEA2009}. On institutional timescales, the effective elimination rate may therefore be extremely small.

\paragraph{Chemical and agricultural residuals.}
Fertilizers, pesticides, and livestock effluents can affect soil and freshwater systems continuously~\cite{Steinfeld2006}. Nitrogen and phosphorus flows are central to the planetary-boundaries literature~\cite{Rockstrom2009a,Rockstrom2009b,Steffen2015}. In this case the coupling can act directly on regeneration: contamination can reduce the quality of the resource base itself, so resource decline and residual accumulation become two aspects of the same degradation process.

\paragraph{A historical precedent: Amazonian dark earths.}
The recycling mechanism of Section~4.1 is not merely a theoretical idealization. Pre-Columbian societies of central Amazonia sustained centuries of deliberate incorporation of charcoal, ash, bone, and organic refuse into otherwise nutrient-poor tropical soils, producing the anthropogenic \emph{terra preta de \'Indio} (Amazonian dark earths): patches of soil with several times the organic matter, nitrogen, and phosphorus content of the surrounding Ferralsols, and a fertility that persists long after abandonment~\cite{Glaser2007}. In the present language, this practice raised the effective recovery fraction $\eta$ of a domestic and agricultural waste stream, redirecting part of what would otherwise have been a purely harmful residual back into the regenerative resource pool, thereby raising both $N_c(\eta)$ and $N_*(\eta)$ in Eq.~\eqref{eq:recycling}. It stands as a documented, pre-industrial demonstration that sustained high-$\eta$ waste management is achievable rather than purely aspirational (an example suggested by R. Gazeau).

These categories should not be collapsed into a single moral or physical category. Their commonality is formal: each involves production, transformation, transport, storage, recycling, and finite rates of removal or isolation. The empirical task is to identify the appropriate state variables and timescales for each class.

\section{A toy explicit numerical example}

The purpose of this section is pedagogical, not predictive. Take
\begin{equation}
b=1,\qquad \varepsilon_{\max}=10,\qquad K=5,\qquad W_c=8.
\end{equation}
Then
\begin{equation}
N_c=10,\qquad N_*=\frac{80}{13}\simeq 6.15.
\end{equation}
Thus $N<6.15$ corresponds to a subcritical waste equilibrium, $6.15<N<10$ to a soft self-limiting regime, and $N\ge 10$ to a hard processing-capacity regime.

\begin{table}[ht]
\centering
\caption{Toy example with $W^*(N)=5N/(10-N)$.}
\begin{tabular}{cccl}
\toprule
$N$ & Regime & $W^*(N)$ & sign of $g(W^*)$ \\
\midrule
3 & $N<N_*$ & $\approx2.14$ & positive \\
6 & $N<N_*$ & $7.50$ & positive, near arrest \\
6.15 & $\approx N_*$ & $\approx8.00$ & marginal \\
8 & $N_*<N<N_c$ & $20.0$ & negative \\
9 & $N_*<N<N_c$ & $45.0$ & negative \\
10 & $N=N_c$ & no finite equilibrium & hard threshold \\
\bottomrule
\end{tabular}
\end{table}

The striking feature is not the numerical values themselves but the qualitative separation of thresholds: an equilibrium can remain mathematically finite while already lying beyond the level compatible with positive activity growth.

Suppose activity jumps to $N=12>N_c$ with $W_0=2$. The asymptotic estimate~\eqref{eq:tauW} gives
\begin{equation}
\tau_W\simeq\frac{8-2}{12-10}=3.
\end{equation}
Because $W$ is not yet large compared with $K$, the exact crossing time obtained by integrating Eq.~\eqref{eq:W} is shorter. This illustrates why reserve-time formulae are diagnostic approximations rather than substitutes for integrating the full dynamics when a threshold is near.

Now take $\eta=0.4$. Equation~\eqref{eq:recycling} rescales both thresholds by $1/(1-\eta)=5/3$:
\begin{equation}
N_c(0.4)\simeq16.67,
\qquad
N_*(0.4)\simeq10.26.
\end{equation}
Recycling can therefore change the qualitative operating regime, while not eliminating finite thresholds.

\section{Carrying capacity as a dynamical quantity}

Classical carrying capacity is often represented by a constant in a logistic equation. In a coupled life-support system this is too restrictive.

\begin{definition}[Dynamical carrying capacity]
Given a model, an admissible region $\mathcal A$ in state space, a time horizon $T$, and specified controls and technologies, the carrying capacity is the largest sustained activity scale for which trajectories remain in $\mathcal A$ over $[0,T]$ within an accepted risk tolerance.
\end{definition}

This definition has several consequences. Carrying capacity depends on time horizon and risk tolerance; it changes with technology, institutions, ecosystem state, and distribution; it may decline through hysteresis; and it is generally uncertain. Most importantly, a system can be below one threshold and above another. Resource adequacy does not imply waste compatibility, while high processing capacity does not imply resource security.

A useful diagnostic is therefore not a single number but a set of margins and reserve times,
\begin{equation}
\mathcal M(t)=\{\tau_{R_i}(t),\,\tau_{W_j}(t),\,\Delta_i(t),\,\Delta_j(t)\},
\end{equation}
where $\Delta$ denotes distance to a relevant threshold. The smallest element of this family need not remain associated with the same variable as technology, behaviour, or environmental conditions evolve.

\section{Relation to existing frameworks and a research programme}

None of the ingredients above is new in isolation. The contribution is the deliberately minimal arrangement of these ingredients around finite buffers, reserve times, and processing thresholds.

Relative to the Spaceship Earth metaphor~\cite{Boulding1966,Fuller1969}, the present approach pushes toward the accounting discipline of actual life-support engineering. Relative to planetary boundaries~\cite{Rockstrom2009a,Rockstrom2009b,Steffen2015}, it emphasizes a dynamical mechanism by which production and processing rates generate soft and hard thresholds. Relative to \emph{The Limits to Growth}~\cite{Meadows1972}, whose World3 model is far richer empirically, the present model is intentionally narrower and analytically transparent. Relative to ecological-footprint accounting~\cite{Rees1992}, it gives explicit status to harmful residual stocks and processing capacities. Relative to industrial ecology and circular-economy approaches, it emphasizes that recycling changes thresholds without providing perfect closure.

A particularly important comparison is with the stock-pollution growth literature. Plourde~\cite{Plourde1972}, Keeler, Spence and Zeckhauser~\cite{Keeler1972}, and Forster~\cite{Forster1973} formalized exploited stocks, pollution accumulation, abatement, and optimal control. John and Pecchenino~\cite{JohnPecchenino1994} showed that growth coupled to environmental quality can generate multiple regimes, while Brock and Taylor~\cite{BrockTaylor2010} later connected growth and abatement in the Green Solow framework. The family resemblance is clear: an activity variable, an accumulating stock, a finite removal or abatement mechanism, and feedback onto growth or welfare.

The differences claimed here are consequently modest and should be stated as such. First, Proposition~1 is a positive threshold result obtained from fixed rules rather than an optimization problem. Second, the framework treats different residual classes as different forms of an elimination operator rather than as one homogeneous pollution stock. Third, the organizing analogy is explicitly multiscale and life-support oriented, linking cellular homeostasis, organismal physiology, and planetary material loops. These differences support a complementary perspective, not a claim of priority over the environmental-economics literature.

A serious planetary audit would require: inventory and flux reconstruction for each resource and residual class; explicit timescale separation; threshold and bifurcation analysis of vector-valued models; uncertainty quantification; and data assimilation. Earth-system models with endogenous human dynamics already provide part of this infrastructure~\cite{Donges2020}. Reinforcement-learning approaches have also been explored for sustainable management in coupled socio-environmental systems~\cite{Strnad2019,RuddJones2024,Wolf2023}, although any control framework should remain constrained by conservation laws, transparent risk criteria, and interpretable objectives.

\section{Limitations}

The model is intentionally minimal and must not be interpreted as a calibrated prediction of planetary collapse. The aggregate variables $R$, $W$, and $N$ hide heterogeneous classes whose reserve times differ by orders of magnitude. The nonlinearities in Eqs.~\eqref{eq:functions}--\eqref{eq:g} are chosen for tractability rather than fitted to data. Spatial transport, inequality, trade, substitution, adaptation, political conflict, and endogenous technological change are omitted. Real thresholds may be uncertain, distributed, delayed, and path dependent. Ethical questions cannot be reduced to dynamical stability.

The reserve-time reformulation also makes explicit that the model does not establish an a priori ranking between resource failure and waste-processing failure. It identifies both as finite-buffer phenomena and shows analytically how a saturating processing law can create a soft threshold before the hard capacity limit. Determining which constraint binds first in a real system is an empirical problem.

A further limitation concerns external shocks. The thresholds analysed here arise endogenously from continuous dynamics. Geomagnetic storms, asteroid impacts, supervolcanism, war, or other abrupt events could instead lower $\varepsilon_{\max}$, reduce $\rho$, damage recycling infrastructure, or shift critical thresholds discontinuously. Such events would require stochastic jump processes or explicitly time-dependent parameters and are outside the present scope.

\section{Conclusion}

This note proposes a minimal dynamical language for analysing materially bounded life-support systems, with Earth as the paradigmatic example. Its main methodological claim is not that one particular constraint is universally dominant, but that sustainability requires simultaneous accounting of finite resource buffers, finite processing capacities, imperfect recycling, and characteristic reserve times.

Three conclusions follow. First, resource depletion and residual accumulation are structurally comparable balance problems: either can become the binding constraint depending on the size of its buffer relative to throughput. Second, a simple saturating-processing model already distinguishes a soft threshold, where the harmful stock suppresses activity while processing remains formally capable, from a hard threshold, where production exceeds maximal processing capacity. Third, carrying capacity is a moving boundary rather than a fixed number, because regeneration, processing, recycling, technology, institutions, and environmental feedbacks all evolve.

The constructive legacy of the Spaceship Earth image is therefore not a slogan of scarcity but a research programme: measure the life-support budget, identify the shortest reserve times and smallest capacity margins, model the coupled loops, and design controls that keep trajectories inside a viable region. In this sense, cellular homeostasis, organismal physiology, space engineering, and planetary science share a common systems question: how long can organized throughput remain compatible with finite buffers and finite processing?

\section*{Acknowledgements}
I am grateful to Marie-Madeleine Opałka for bringing the Obomsawin quotation to my attention. I thank   Romain Gazeau for drawing my attention to soil and water contamination and its consequences for biodiversity, and for suggesting the Amazonian dark earths (\emph{terra preta}) as a historical example of sophisticated waste recycling; Fanny Pratt for pointing me toward the steady-state-economy literature; and Jeanne Gauthier for her attentive engagement with these ideas on waste during a discussion in Montreal in August 2019. An AI language model (Claude, Anthropic; Sonnet 4.5 series, accessed July 2026) was used to assist with condensing and restructuring the text of an earlier, longer working version of this note into the version from which the present revision was developed, and in drafting the accompanying cover letter. All scientific content, modelling choices, conclusions, and the present revisions were reviewed and verified by the author.

\section*{Author Contributions}
J.-P.~G. conceived the framework, derived the analytical results, designed the numerical example, and wrote the manuscript.

\section*{Financial Support}
This research received no specific grant from any funding agency, commercial or not-for-profit sectors.

\section*{Conflicts of Interest}
The author declares none.

\section*{Data Availability Statement}
This is a theoretical and conceptual note. No new empirical data were generated or analysed. The toy numerical example is fully specified by the parameter values given in the text and can be reproduced from the closed-form expressions above.


\begin{thebibliography}{99}


\bibitem{Boulding1966} Boulding, K.~E. (1966). The economics of the coming spaceship Earth. In H. Jarrett (Ed.), \emph{Environmental Quality in a Growing Economy}. Johns Hopkins University Press.

\bibitem{BrockTaylor2010} Brock, W.~A., \& Taylor, M.~S. (2010). The Green Solow model. \emph{Journal of Economic Growth}, 15(2), 127--153.

\bibitem{Daly1977} Daly, H.~E. (1977). \emph{Steady-State Economics}. W. H. Freeman.

\bibitem{Donges2020} Donges, J.~F., Heitzig, J., Barfuss, W., et al. (2020). Earth system modeling with endogenous and dynamic human societies: the copan:CORE open World--Earth modeling framework. \emph{Earth System Dynamics}, 11(2), 395--413.

\bibitem{Forster1973} Forster, B.~A. (1973). Optimal consumption planning in a polluted environment. \emph{Economic Record}, 49(4), 534--545.

\bibitem{Fuller1969} Fuller, R.~B. (1969). \emph{Operating Manual for Spaceship Earth}. Southern Illinois University Press.

\bibitem{GeorgescuRoegen1971} Georgescu-Roegen, N. (1971). \emph{The Entropy Law and the Economic Process}. Harvard University Press.

\bibitem{Glaser2007} Glaser, B. (2007). Prehistorically modified soils of central Amazonia: a model for sustainable agriculture in the twenty-first century. \emph{Philosophical Transactions of the Royal Society B}, 362(1478), 187--196.

\bibitem{IAEA2009} International Atomic Energy Agency. (2009). \emph{Classification of Radioactive Waste} (General Safety Guide GSG-1). IAEA.

\bibitem{IPCC2021} IPCC. (2021). \emph{Climate Change 2021: The Physical Science Basis}. Cambridge University Press.

\bibitem{JohnPecchenino1994} John, A., \& Pecchenino, R. (1994). An overlapping generations model of growth and the environment. \emph{The Economic Journal}, 104(427), 1393--1410.

\bibitem{Keeler1972} Keeler, E., Spence, M., \& Zeckhauser, R. (1972). The optimal control of pollution. \emph{Journal of Economic Theory}, 4(1), 19--34.

\bibitem{LopezOtin2013} L\'opez-Ot\'in, C., Blasco, M.~A., Partridge, L., Serrano, M., \& Kroemer, G. (2013). The hallmarks of aging. \emph{Cell}, 153(6), 1194--1217.

\bibitem{Meadows1972} Meadows, D.~H., Meadows, D.~L., Randers, J., \& Behrens, W.~W. III. (1972). \emph{The Limits to Growth}. Universe Books.

\bibitem{Mill2011} Mill, J.~S. (2011). \emph{Principles of Political Economy}. Cambridge University Press reprint. Original work published 1848.

\bibitem{Nelson1996} Nelson, M., Dempster, W.~F., Alvarez-Romo, N., \& MacCallum, T. (1996). The legacy of Biosphere 2 for the study of biospherics and closed ecological systems. \emph{Advances in Space Research}, 18(1--2), 179--194.

\bibitem{Obomsawin1972} Obomsawin, A. (1972). Quoted in T. Poole, “Conversations with North American Indians,” in R. Osborne (Ed.), Who Is the Chairman of This Meeting? A Collection of Essays, Neewin Publishing, Toronto, p. 43.

\bibitem{Plourde1972} Plourde, C.~G. (1972). A model of waste accumulation and disposal. \emph{Canadian Journal of Economics}, 5(1), 119--125.

\bibitem{Prigogine1977} Prigogine, I. (1977). Time, structure and fluctuations [Nobel Lecture]. Nobel Foundation.

\bibitem{Rees1992} Rees, W.~E. (1992). Ecological footprints and appropriated carrying capacity: what urban economics leaves out. \emph{Environment and Urbanization}, 4(2), 121--130.

\bibitem{Rockstrom2009a} Rockstr\"om, J., Steffen, W., Noone, K., et al. (2009). A safe operating space for humanity. \emph{Nature}, 461, 472--475.

\bibitem{Rockstrom2009b} Rockstr\"om, J., Steffen, W., Noone, K., et al. (2009). Planetary boundaries: exploring the safe operating space for humanity. \emph{Ecology and Society}, 14(2), Article 32.

\bibitem{RuddJones2024} Rudd-Jones, J., Thendean, F., \& P\'erez-Ortiz, M. (2024). Crafting desirable climate trajectories with reinforcement learning explored socio-environmental simulations. arXiv:2410.07287.

\bibitem{Schrodinger1944} Schr\"odinger, E. (1944). \emph{What Is Life?}. Cambridge University Press.

\bibitem{Smil2017} Smil, V. (2017). \emph{Energy and Civilization: A History}. MIT Press.

\bibitem{Steffen2015} Steffen, W., Richardson, K., Rockstr\"om, J., et al. (2015). Planetary boundaries: guiding human development on a changing planet. \emph{Science}, 347(6223), 1259855.

\bibitem{Steinfeld2006} Steinfeld, H., Gerber, P., Wassenaar, T., Castel, V., Rosales, M., \& de Haan, C. (2006). \emph{Livestock's Long Shadow: Environmental Issues and Options}. FAO.

\bibitem{Strnad2019} Strnad, F., Barfuss, W., Donges, J.~F., \& Heitzig, J. (2019). Deep reinforcement learning in World--Earth system models to discover sustainable management strategies. \emph{Chaos}, 29(12), 123122.

\bibitem{Struik1978} Struik, L.~C.~E. (1978). \emph{Physical Aging in Amorphous Polymers and Other Materials}. Elsevier Scientific Publishing Company.

\bibitem{Vernadsky1998} Vernadsky, V.~I. (1998). \emph{The Biosphere}. Copernicus/Springer. English translation of the 1926 original.

\bibitem{Wolf2023} Wolf, T., Nardelli, N., Shawe-Taylor, J., \& P\'erez Ortiz, M. (2023). Can reinforcement learning support policy makers? A preliminary study with integrated assessment models. arXiv:2312.06527.

\end{thebibliography}
\end{document}